\def\ket#1{|#1\rangle}
\def\bra#1{\langle#1|}
\def\tr{\mathrm{tr}}
\newcommand{\bq}{{\mathbf q}}
\newcommand{\bn}{{\mathbf n}}
\newcommand{\cohalpha}{\alpha}
\newcommand{\bJ}{{\bf J}}
\newcommand{\hbq}{\hat{{\bf q}}}
\newcommand{\dqJ}{2{\bf q}\boldsymbol{\cdot}{\bf J}}
\newcommand{\bx}{{\bf x}}
\newcommand{\bu}{{\bf u}}
\newtheorem{theorem}{Theorem}[]
\newtheorem{lemma}[]{Lemma}
\begin{document}

\title{Tensor Representation of Spin States}

\author{O.~Giraud$^{1}$, D.~Braun$^{2}$, D.~Baguette$^{3}$, T.~Bastin$^{3}$, J.~Martin$^{3}$}
\affiliation{$^{1}$\mbox{LPTMS, CNRS and Universit\'e Paris-Sud, UMR 8626, B\^at.~100,
91405 Orsay, France}\\
$^{2}$Institut f\"ur theoretische Physik, Universit\"at T\"{u}bingen,
72076 T\"ubingen, Germany\\
$^{3}$\mbox{Institut de Physique Nucl\'eaire, Atomique et de Spectroscopie,~Universit\'e de Li\`ege, 4000 Li\`ege, Belgium}
}

\begin{abstract}
We propose a generalization of the Bloch sphere representation for
arbitrary spin states. It provides a
compact and elegant representation of spin  density matrices 
in terms of tensors that share the most important properties of
Bloch vectors. Our representation, based on covariant matrices introduced by Weinberg in the context of quantum field theory, allows for a simple parametrization of  coherent  spin states, and a straightforward transformation of density matrices under 
local unitary and partial tracing operations. It enables us to provide a criterion for anticoherence, relevant in a broader context such as quantum polarization of light.

\end{abstract}
\date{September 3, 2014 - Revised: February 26, 2015}

\pacs{03.65.Aa, 03.65.Ca}

\maketitle

The concept of spin is ubiquitous in quantum theory and all related fields of research, such as solid-state physics, molecular, atomic, nuclear or high-energy physics~\cite{Perdew86,Yu50,Pauli25,Ohlsen72,Adamczyk14}. It has profound implications for the structure of matter as a consequence of the celebrated spin-statistics theorem~\cite{Pauli40}. The spin of a quantum system, be it an electron, a nucleus or an atom, has also been proven to be a key resource for many applications such as in spintronics~\cite{Zutic04}, quantum information theory~\cite{Nie00} or nuclear magnetic resonance~\cite{Rabi38}. Simple geometrical representations of spin states~\cite{Goyal11} allow one to develop physical insight regarding their general properties and evolution. Particularly well studied is the case of a single two--level system, formally equivalent to a spin-1/2. In this case, the geometric representation is particularly simple. Indeed, the density matrix can be expressed in a basis formed of Pauli matrices and the identity matrix, leading to a parametrization in terms of a vector in ${\mathbb R}^3$. Pure states correspond to points on a unit sphere, the so-called Bloch sphere, and mixed states fill the inside of the sphere, the ``Bloch ball''.  The simplicity of this representation help visualize
the action and geometry of all possible spin-$1/2$ quantum channels~\cite{Zyczkowski_book}. For arbitrary pure spin states, another nice geometrical representation has been developed by Majorana in which a spin-$j$ state is visualized as $2j$ points on the Bloch sphere~\cite{Majorana32}. This so-called Majorana or stellar representation has been exploited in various contexts (see, e.~g.,~\cite{Bloch45,Mar10,Markham11,Giraud10,Bruno12,Zyczkowski_book}), but cannot be generalized to mixed spin states.

Given the importance of geometrical representations, there have been numerous attempts to extend the previous representations to arbitrary mixed states. The former rely on a variety of sophisticated mathematical concepts such as $su(N)$-algebra generators~\cite{SUNpos,SUN,Goyal11}, polarization operator basis~\cite{Aga81,Polapos,Pola}, Weyl operator basis~\cite{Weyl}, quaternions~\cite{Mosseri01}, octonions~\cite{Ber03} or Clifford algebra~\cite{Dietz06}. In the present Letter we propose an elegant generalisation to arbitrary spin-$j$ of the spin-1/2 Bloch sphere representation based on matrices introduced by Weinberg in the context of relativistic quantum field theory~\cite{Weinberg}. The main result of the paper is theorem 2, which allows us to express any spin-$j$ density matrix as a linear combination of matrices with convenient properties.
The remarkable features of our representation are especially reflected in
the simple coordinates of coherent states, transformation under SU(2)
operations,  and the simplicity of the representation of reduced density matrices.
To illustrate the usefulness of such a representation, we show that it allows us to give an easy characterisation of anticoherent spin states. Such states have been studied in various contexts, such as quantum polarization of light (see e.g.~\cite{Hoz13,SanchezSoto13}), spherical designs~\cite{Crann}, as well as in the search for maximally entangled symmetric states~\cite{BBM14}. We believe that our representation should prove useful in many of the contexts where the spin formalism is used.

We construct this parametrization of a spin-$j$ density matrix $\rho$ from the set of $4^{2j}$ covariant matrices~\cite{Weinberg}.   
Defining the 4-vector $q=(q_{0},q_{1},q_{2},q_{3})\equiv (q_0,\bq)$,
Weinberg's covariant matrices $S_{\mu_1\mu_2\ldots\mu_{2j}}$, with
$0\leqslant \mu_i\leqslant 3$, are constructed from products of components of $\mathbf{J}=(J_1,J_2,J_3)$ with $J_a$ ($1\leqslant a\leqslant 3$), the
usual $(2j+1)$-dimensional representations of angular momentum operators. They
can be obtained by expanding the square of the $(2j+1)$-dimensional matrix corresponding to the $(j,0)$ 
representation of a Lorentz boost in direction ${\bf q}$,
which can be put in the form~\cite{Weinberg}
\begin{equation}
\label{lorentzboost}
 \Pi^{(j)}(q) \equiv(q_0^2-|\bq|^2)^j\,e^{-2\eta_q\,\hbq\boldsymbol{\cdot}\bJ}
\end{equation}
with $\eta_q=\mathrm{arctanh}(-|\bq|/q_0)$ and $\hbq=\bq/|\bq|$. Matrices $S_{\mu_1\mu_2\ldots\mu_{2j}}$ 
are defined in~\cite{Weinberg}  by identifying the coefficients of the
multivariate polynomial with variables $q_{0},q_{1},q_{2},q_{3}$ in \eqref{lorentzboost} with those of the polynomial
\begin{equation}
\label{egaliteweinberg}
 \Pi^{(j)}(q)= (-1)^{2j}q_{\mu_1}q_{\mu_2}\ldots q_{\mu_{2j}}S_{\mu_1\mu_2\ldots\mu_{2j}}
\end{equation}
(we use Einstein summation convention for repeated indices). An explicit expression for $\Pi^{(j)}(q)$  is given in~\cite{Weinberg} as
\begin{widetext}
\begin{equation}
\label{pientier}
\Pi^{(j)}(q)=(q_0^2-{\bf q}^2)^j+\sum_{k=1}^{j}\frac{(q_0^2-{\bf q}^2)^{j-k}}{(2k)!}(\dqJ)\left(\prod_{r=1}^{k-1}[(\dqJ)^2-(2r\bq)^2]\right)(\dqJ+ 2k q_0)
\end{equation}
for integer $j$, and
\begin{equation}
\label{pidemi}
\Pi^{(j)}(q)=(q_0^2-{\bf q}^2)^{j-1/2}(-q_0-\dqJ)-\sum_{k=1}^{j-1/2}\frac{(q_0^2-{\bf q}^2)^{j-1/2-k}}{(2k+1)!}\left(\prod_{r=1}^{k}[(\dqJ)^2-((2r-1)\bq)^2]\right)(\dqJ+(2k+1)q_0)
\end{equation}
for half-integer $j$. The identity matrix is implicit in front of constant terms. For instance, identifying the coefficient of $q_0^{2j}$ in these expressions we get that $S_{00\ldots 0}$ is the $(2j+1)$-dimensional identity matrix, $\mathbb{1}_{2j+1}$. 
\end{widetext}
The matrices $S_{\mu_1\mu_2\ldots\mu_{2j}}$ are Hermitian matrices, invariant under permutation of indices, and they obey the following linear relation:
\begin{equation}
\label{traceless}
g_{\mu_1\mu_2}S_{\mu_1\mu_2\ldots\mu_{2j}}=0,
\end{equation}
where $g\equiv\mathrm{diag}(-,+,+,+)$.

Let us briefly consider the simplest examples. From \eqref{pidemi}, the explicit expression of $\Pi^{(j)}(q)$ for spin-1/2 reads
\begin{equation}
\label{pi12}
\Pi^{(1/2)}(q)=-q_0-\dqJ
\end{equation}
where $J_a$ are spin-1/2 representations of the angular momentum operators. Identifying with \eqref{egaliteweinberg} directly gives $S_{0}=\sigma_0$ and $S_{a}=2J_a=\sigma_a$ where $\sigma_0$ is the $2\times 2$ identity matrix and $\sigma_a$ are the usual Pauli matrices. The usual Bloch sphere representation for an arbitrary spin-1/2 density matrix $\rho=\frac12\sigma_0+\frac12{\bf x}\boldsymbol{\cdot}\boldsymbol{\sigma}$ can then be expressed in terms of the $S_{\mu_1}$ ($0\leqslant \mu_1\leqslant 3$) as
\begin{equation}
\rho=\frac{1}{2}\,x_{\mu_1} S_{\mu_1}
\label{canonrhoj12}
\end{equation}
with the Bloch vector ${\bf x}=\tr(\rho\,{\boldsymbol{\sigma}})$ and $x_0=1$.

For $j=1$, the equality between expressions \eqref{egaliteweinberg} and \eqref{pientier} for $\Pi^{(1)}(q)$ reads
\begin{equation}
(q_0^2-{\bf q}^2)+2{\bf q}\boldsymbol{\cdot}{\bf J}\left({\bf q}\boldsymbol{\cdot}{\bf J}+q_0\right)=q_{\mu_1}q_{\mu_2}S_{\mu_1\mu_2}.
\end{equation}
Identifying coefficients of this quadratic form yields
$S_{00}=J_0$, $S_{a0}=J_a$ and $S_{ab}=J_{a}J_{b}+J_{b}J_{a}-\delta_{ab}J_0$ with $J_0$ the $3\times 3$ identity matrix. Again, the set of $S_{\mu_1\mu_2}$ matrices can serve to express any spin-1 density matrix $\rho$ as
\begin{equation}
\rho=\frac14\,x_{\mu_1\mu_2}S_{\mu_1\mu_2}
\end{equation}
with coordinates 
\begin{equation}
\label{t1}
x_{\mu_1\mu_2}=\tr(\rho\,S_{\mu_1\mu_2}).
\end{equation}

Expressions \eqref{pientier}--\eqref{pidemi} can be used to generalize this expansion to arbitrary $j$, as we will show in Theorem 2. 
The main property of the covariant matrices is given by Theorem 1 below. We first give a useful lemma.

\begin{lemma}
\label{lemmedaniel}
Let 
$\ket{\cohalpha}$ be a spin-$j$ coherent state, defined for 
$\alpha=e^{-i\varphi}\cot(\theta/2)$ with
$\theta\in[0,\pi]$ and $\varphi\in [0,2\pi[$ by
\begin{equation}
\label{coh}
\ket{\cohalpha}=\sum_{m=-j}^{j}\sqrt{\binom{2j}{j+m}}\left[\sin\!\tfrac{\theta}{2}\right]^{j-m}\left[\cos\!\tfrac{\theta}{2}e^{-i\varphi}
\right]^{j+m}\ket{j,m}
\end{equation}
in the standard angular momentum basis $\{\ket{j,m}:-j\leqslant m \leqslant j\}$, and let $\bn=(\sin\theta\cos\varphi,\sin\theta\sin\varphi,\cos\theta)$. Then
\begin{equation}
\label{eq:prop}
\bra{\cohalpha}\Pi^{(j)}(q)\ket{\cohalpha}=(-1)^{2j}(q_0+\bq\boldsymbol{\cdot}\bn)^{2j}.
\end{equation}
\end{lemma}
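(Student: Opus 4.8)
The plan is to reduce the whole statement to the elementary spin-$1/2$ case by exploiting the fact that a spin-$j$ coherent state is a symmetric power of a spin-$1/2$ coherent state. First I would rewrite the boost operator \eqref{lorentzboost} as $\Pi^{(j)}(q)=(q_0^2-|\bq|^2)^{j}\,e^{-2\eta_q\,\hbq\boldsymbol{\cdot}\bJ}$ and realize the $(2j+1)$-dimensional irreducible representation as the totally symmetric subspace of $(\mathbb{C}^2)^{\otimes 2j}$, on which the collective generator is the sum of single-particle ones, $\hbq\boldsymbol{\cdot}\bJ=\sum_{k=1}^{2j}\hbq\boldsymbol{\cdot}\bJ^{(1/2)}_k$. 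Since the $2j$ terms act on distinct tensor factors and hence commute, the exponential factorizes; together with $(q_0^2-|\bq|^2)^{j}=\big[(q_0^2-|\bq|^2)^{1/2}\big]^{2j}$ this shows that, on the symmetric subspace, $\Pi^{(j)}(q)$ coincides with $\big(\Pi^{(1/2)}(q)\big)^{\otimes 2j}$.

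Next I would identify $\ket{\cohalpha}$ with the symmetric product $\ket{\bn}^{\otimes 2j}$ of the single spin-$1/2$ coherent state $\ket{\bn}=\cos\tfrac{\theta}{2}e^{-i\varphi}\ket{\up}+\sin\tfrac{\theta}{2}\ket{\down}$. Expanding this product and grouping the $\binom{2j}{j+m}$ permutations having $j+m$ factors $\ket{\up}$ and $j-m$ factors $\ket{\down}$ into the Dicke state $\ket{j,m}$ reproduces exactly the binomial weights and phases of \eqref{coh}. One also checks that $\ket{\bn}$ is the $+1$ eigenstate of $\bn\boldsymbol{\cdot}\boldsymbol{\sigma}$, so its Bloch vector is $\bn$.

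With these two facts the expectation value factorizes completely,
\begin{equation}
\bra{\cohalpha}\Pi^{(j)}(q)\ket{\cohalpha}=\big(\bra{\bn}\Pi^{(1/2)}(q)\ket{\bn}\big)^{2j},
\end{equation}
and the remaining spin-$1/2$ matrix element is immediate from \eqref{pi12}, namely $\Pi^{(1/2)}(q)=-q_0-\bq\boldsymbol{\cdot}\boldsymbol{\sigma}$, together with $\bra{\bn}\bq\boldsymbol{\cdot}\boldsymbol{\sigma}\ket{\bn}=\bq\boldsymbol{\cdot}\bn$, giving $\bra{\bn}\Pi^{(1/2)}(q)\ket{\bn}=-(q_0+\bq\boldsymbol{\cdot}\bn)$. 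Raising to the $2j$-th power yields $(-1)^{2j}(q_0+\bq\boldsymbol{\cdot}\bn)^{2j}$, which is \eqref{eq:prop}.

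I expect the main obstacle to be the bookkeeping in the second step: verifying that the symmetric tensor power of $\ket{\bn}$ has precisely the coefficients of \eqref{coh}, including the phase convention tied to $\alpha=e^{-i\varphi}\cot(\theta/2)$. A secondary subtlety is that $\eta_q$ and $(q_0^2-|\bq|^2)^{1/2}$ are real only for $|\bq|<q_0$; since $\Pi^{(j)}(q)$ is a genuine polynomial in $q$ by \eqref{pientier}--\eqref{pidemi}, I would either argue on that domain and extend by polynomiality, or simply note that both the factorization $\Pi^{(j)}=(\Pi^{(1/2)})^{\otimes 2j}$ and the final formula are polynomial identities in $q$ that hold verbatim.
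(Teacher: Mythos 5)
Your proposal is correct, but it takes a genuinely different route from the paper's. The paper proves the lemma by working entirely inside the $(2j+1)$-dimensional irrep: it applies the SU(2) disentangling theorem to normal-order the exponential $e^{-2\eta_q\,\hbq\boldsymbol{\cdot}\bJ}$ in \eqref{lorentzboost} and evaluates its coherent-state expectation value in closed form. You instead embed the spin-$j$ irrep in the symmetric subspace of $(\mathbb{C}^2)^{\otimes 2j}$, use $\bJ=\sum_k\bJ^{(1/2)}_k$ to factorize the boost as $\big(\Pi^{(1/2)}(q)\big)^{\otimes 2j}$ restricted to that subspace, identify $\ket{\cohalpha}=\ket{\bn}^{\otimes 2j}$, and reduce everything to a trivial $2\times 2$ computation. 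Note that your operator factorization is essentially the content of the paper's Theorem~\ref{THM0} (the $S_{\mu_1\ldots\mu_N}$ as projected Pauli tensors), which the paper \emph{derives from} this lemma via equality of Husimi functions; since you obtain the factorization independently from the Lie-algebra embedding, there is no circularity --- on the contrary, your argument delivers Theorem~\ref{THM0} as a by-product and renders the Husimi-function step unnecessary. The one point to nail down is the sign for half-integer $j$: the prefactor $(q_0^2-|\bq|^2)^{j}$ in \eqref{lorentzboost} has a branch ambiguity, and the naive positive root would give $\Pi^{(1/2)}(q)=+q_0+\bq\boldsymbol{\cdot}\boldsymbol{\sigma}$ rather than the $-q_0-\dqJ$ of \eqref{pi12}; anchoring the factorization to the explicit polynomial expressions \eqref{pientier}--\eqref{pidemi}, as your final appeal to polynomiality does, fixes the overall $(-1)^{2j}$ correctly. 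In exchange for front-loading the symmetric-subspace bookkeeping, your route is more elementary (no operator-ordering identities) and more structurally transparent; the paper's route stays within the spin-$j$ Hilbert space and needs no multi-qubit machinery.
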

The proof of this lemma is based on the SU(2) disentangling theorem
and can be found in the Supplemental Material. One of its consequences is that, by identifying coefficients of the polynomial in $q_{\mu}$ in \eqref{eq:prop}, we get
\begin{equation}
  \label{eq:Stn}
\langle 
\alpha|S_{\mu_1\mu_2\ldots\mu_{2j}}|\alpha\rangle=n_{\mu_1}n_{\mu_2}\ldots
n_{\mu_{2j}},
\end{equation}  
with $n_0=1$.

In the Majorana representation, any pure spin-$j$ state is viewed as a permutation symmetric state of a system of $N\equiv 2j$ spin-$1/2$, or equivalently as an $N$-qubit symmetric state. The Hilbert space $\mathcal{H}\equiv \mathbb{C}^{2^N}$ of an
$N$ spin-$1/2$ system has dimension $2^N$ but its symmetric subspace
$\mathcal{H}_S$ has only dimension $N+1=2j+1$. It is spanned by
the symmetric Dicke states
\begin{equation}
\ket{D_{N}^{(k)}}=\mathcal{N}\sum_{\pi}\ket{\underbrace{\downarrow\ldots \downarrow}_{N-k}\underbrace{\uparrow\ldots \uparrow}_{k}},
\quad k = 0, \ldots, N ,
\end{equation}
where the sum runs over all permutations of the string with $N-k$ spin down and $k$ spin up, and $\mathcal{N}$ is the 
normalization constant. The Dicke state $\ket{D_{N}^{(k)}}$ corresponds to $\ket{j,m}$ with $j=N/2$ and $m=k-N/2$.

Let $\mathcal{L(H)}$ be the Hilbert space of linear operators
acting on the finite-dimensional space $\mathcal{H}$. An operator basis for
$\mathcal{L(H)}$ equipped with the standard Hilbert-Schmidt inner product is given by the set of the $4^N$ generalized Pauli
matrices defined as the $N$-fold tensor products of the $2\times 2$
matrices $\sigma_0, \sigma_1, \sigma_2, \sigma_3$~\cite{Nie00}, 
\begin{equation}
\label{Ximatrices}
\boldsymbol{\sigma}_{\mu_1\mu_2\ldots\mu_N}=\sigma_{\mu_1}\otimes\sigma_{\mu_2}\otimes\ldots\otimes\sigma_{\mu_N}.
\end{equation}
These Hermitian operators verify the relations $\tr(\boldsymbol{\sigma}_{\mu_1\mu_2\ldots\mu_N}\boldsymbol{\sigma}_{\nu_1\nu_2\ldots\nu_N})=2^N\delta_{\mu_1 \nu_1}\delta_{\mu_2 \nu_2}\ldots \delta_{\mu_N \nu_N}$ and thus form an orthogonal basis. Any state $\rho$ of $N$ spin-$1/2$ can be expanded in this basis as  
\begin{equation}
\label{rhoarbitrary}
\rho=\frac{1}{2^N}\,x_{\mu_1\mu_2\ldots\mu_N} \boldsymbol{\sigma}_{\mu_1\mu_2\ldots\mu_N},
\end{equation}
where $x_{\mu_1\mu_2\ldots\mu_N}$ are real coefficients given by
\begin{equation}
\label{coeffdecompPauli}
x_{\mu_1\mu_2\ldots\mu_N} = \tr(\rho\,\boldsymbol{\sigma}_{\mu_1\mu_2\ldots\mu_N}).
\end{equation}
We can now prove the following theorem:
\begin{theorem}
\label{THM0}
The Weinberg covariant matrices defined in Eq.~(\ref{egaliteweinberg})
are given by the projection of tensor products of Pauli
matrices into the subspace $\mathcal{H}_S$ of states that are invariant under
permutation of particles. Namely, denoting by
$\mathcal{P}_S\equiv\sum_{k=0}^N\ket{D_N^{(k)}}\bra{D_N^{(k)}}$ the
projector onto $\mathcal{H}_S$, the $S_{\mu_1\mu_2\ldots\mu_N}$ matrix corresponds to the $(N+1)$-dimensional block spanned by the $\ket{D_N^{(k)}}$ of the matrix $\mathcal{P}_S\,\boldsymbol{\sigma}_{\mu_1\mu_2\ldots\mu_N}\,\mathcal{P}_S^\dagger$, i.e., in terms of matrix elements
\begin{equation}
\label{Squbit}
\langle D_N^{(k)}|S_{\mu_1\mu_2\ldots\mu_N}|D_N^{(\ell)}\rangle=\langle D_N^{(k)}|\boldsymbol{\sigma}_{\mu_1\mu_2\ldots\mu_N}|D_N^{(\ell)}\rangle,
\end{equation}
with $0\leqslant k,\ell\leqslant N$.
\end{theorem}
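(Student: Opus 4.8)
\emph{Proof strategy.} The plan is to compare the two sides of \eqref{Squbit} as Hermitian operators on the $(N+1)$-dimensional symmetric subspace $\mathcal{H}_S$, and to establish their equality by evaluating their diagonal matrix elements in every spin-$j$ coherent state $\ket{\alpha}$. The two evaluations are then fed into a uniqueness principle: because the coherent-state projectors $\ketbra{\alpha}$ span $\mathcal{L}(\mathcal{H}_S)$, any operator on $\mathcal{H}_S$ is completely fixed by its diagonal coherent-state expectations (its Husimi function). This is what ultimately upgrades pointwise agreement into the operator identity \eqref{Squbit}.

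For the left-hand side I would simply invoke \eqref{eq:Stn}, which follows from the preceding Lemma~\ref{lemmedaniel}: matching the coefficients of $q_{\mu_1}\cdots q_{\mu_N}$ in $\bra{\alpha}\Pi^{(j)}(q)\ket{\alpha}=(-1)^{N}(q_0+\bq\boldsymbol{\cdot}\bn)^{N}$ against the definition \eqref{egaliteweinberg} yields $\bra{\alpha}S_{\mu_1\ldots\mu_N}\ket{\alpha}=n_{\mu_1}\cdots n_{\mu_N}$, with $n_0=1$.

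For the right-hand side the starting point is the factorization of the coherent state \eqref{coh} into a product of $N$ identical qubit coherent states, $\ket{\alpha}=\ket{\bn}^{\otimes N}$ with $\ket{\bn}=\cos\tfrac{\theta}{2}\,e^{-i\varphi}\ket{\uparrow}+\sin\tfrac{\theta}{2}\ket{\downarrow}$ pointing along $\bn$; this follows by expanding the tensor power over the Dicke basis and recognizing the binomial weights appearing in \eqref{coh}. Since $\ket{\alpha}\in\mathcal{H}_S$ we have $\mathcal{P}_S\ket{\alpha}=\ket{\alpha}$, so the diagonal element of $\mathcal{P}_S\,\boldsymbol{\sigma}_{\mu_1\ldots\mu_N}\,\mathcal{P}_S^\dagger$ reduces to $\bra{\alpha}\boldsymbol{\sigma}_{\mu_1\ldots\mu_N}\ket{\alpha}$. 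The tensor-product form \eqref{Ximatrices} then factorizes this into $\prod_{i=1}^{N}\bra{\bn}\sigma_{\mu_i}\ket{\bn}$, and the elementary spin-$1/2$ identity $\bra{\bn}\sigma_\mu\ket{\bn}=n_\mu$ (with $n_0=1$) produces $n_{\mu_1}\cdots n_{\mu_N}$, the very same polynomial in $\bn$ obtained for the left-hand side.

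It then remains to make the uniqueness step rigorous, which I expect to be the main obstacle, since the theorem concerns all Dicke matrix elements whereas we only control the diagonal coherent-state ones. Writing the diagonal expectation $\bra{\alpha}A\ket{\alpha}$ of a generic operator $A$ on $\mathcal{H}_S$ as a polynomial in $\alpha$ and $\bar\alpha$ through \eqref{coh}, the monomials $\alpha^{k}\bar\alpha^{\ell}$ with $0\leqslant k,\ell\leqslant N$ are linearly independent functions, and their coefficients are precisely the matrix elements $\bra{D_N^{(k)}}A\ket{D_N^{(\ell)}}$ up to the known binomial normalizations. Hence equality of the Husimi functions forces equality of all Dicke matrix elements. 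Applying this to the difference of the two operators, whose Husimi functions coincide by the two computations above, delivers \eqref{Squbit} for all $0\leqslant k,\ell\leqslant N$ and completes the proof.
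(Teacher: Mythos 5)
Your proposal is correct and follows essentially the same route as the paper's proof: both sides are evaluated on spin-$j$ coherent states (the left via Lemma~\ref{lemmedaniel}/Eq.~\eqref{eq:Stn}, the right via the tensor-product factorization of $\ket{\alpha}$ and its invariance under $\mathcal{P}_S$), and equality of the two Husimi functions is then upgraded to an operator identity on $\mathcal{H}_S$. Your expansion in the monomials $\alpha^k\bar\alpha^\ell$ merely spells out the uniqueness step that the paper states without detail, so there is no substantive difference.
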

\begin{proof}\label{pr.thm0}
Let $\tilde{S}_{\mu_1\mu_2\ldots\mu_N}=\mathcal{P}_S\,\boldsymbol{\sigma}_{\mu_1\mu_2\ldots\mu_N}\,\mathcal{P}_S^\dagger$. Any spin-$j$ coherent state $|\alpha\rangle$ defined by Eq.~\eqref{coh} can also be written as the tensor product of identical
spin-1/2 coherent states. 
As a symmetric state, $|\alpha\rangle$ is invariant under $\mathcal{P}_S$,
i.e., $|\alpha\rangle=\mathcal{P}_S|\alpha\rangle$, so that 
$\langle\alpha|\tilde{S}_{\mu_1\mu_2\ldots\mu_N}|\alpha\rangle=\langle\alpha|\boldsymbol{\sigma}_{\mu_1\mu_2\ldots\mu_N}|\alpha\rangle=n_{\mu_1}n_{\mu_2}\ldots
n_{\mu_{N}}$. 
Using Eq.~\eqref{eq:Stn}, we thus have
\begin{equation}
  \label{eq:SStn}
\langle 
\alpha|\tilde{S}_{\mu_1\mu_2\ldots\mu_N}|\alpha\rangle=\langle \alpha|S_{\mu_1\mu_2\ldots\mu_N}|\alpha\rangle
\end{equation}
  for all $\alpha$, i.e., the Husimi functions of the two operators are
  identical. Therefore  $S_{\mu_1\mu_2\ldots\mu_N}$ and
  $\tilde{S}_{\mu_1\mu_2\ldots\mu_N}$ coincide
in $\mathcal{H}_S$. 
\end{proof}
In other words, instead of obtaining the Weinberg matrices from the
expansion of the rather complicated expressions
(\ref{pientier})--(\ref{pidemi}), we can construct them simply by
projecting the corresponding tensor product of Pauli operators into the
symmetric subspace.  In order to fully exploit the consequences of
this fact, we need some basic notions of frame theory~\cite{Cas13}. 

A family of vectors
  $|\phi_i\rangle$, $i\in\{1,\ldots,M\}$, is called a frame for a
  Hilbert space ${\cal H}$ with bounds $A, B\in ]0, \infty[$, if
\begin{equation}
    \label{eq:framedef}
    A||\psi||^2\leqslant \sum_{i=1}^M|\langle \psi|\phi_i\rangle|^2\leqslant
    B||\psi||^2, \;\;\;\forall\; |\psi\rangle\in{\cal H}. 
  \end{equation}
If $A=B$, then the frame is called an $A$-tight frame.

Orthonormal bases are a special case of $A$-tight frames. In particular, the
generalized Pauli matrices \eqref{Ximatrices} form -- up to normalization --
an orthonormal basis of $\mathcal{L(H)}$, and are in fact an $A$-tight frame, which verifies Eq.~(\ref{eq:framedef}) with $A=B=2^N$ and $M=4^N$. According to proposition 22 in~\cite{Cas13}, a frame of a Hilbert space
${\cal H}$ 
with bounds $A,B$ that is orthogonally projected to a subspace $P
{\cal H}$
is a frame  of $P{\cal H}$ with the same bounds $A,B$.  Therefore
we have as a corollary of Theorem
\ref{THM0} that the set of covariant matrices $S_{\mu_1\mu_2\ldots\mu_N}$ forms a
$2^N$-tight frame for $\mathcal{L}(\mathcal{H}_S)$.

Tight frames are in a sense a generalization of orthonormal bases, as they allow an expansion
over the elements of the frames with the same formulas as for an
orthonormal basis, i.e., for all 
$|\psi\rangle\in {\cal H}$, we have $|\psi\rangle=A^{-1}\sum_{i=1}^M\langle
\phi_i|\psi\rangle |\phi_i\rangle$ (proposition 20 in~\cite{Cas13}).
This immediately entails the following result, which
provides a generalization of 
the Bloch sphere representation for spin-1/2, Eq.~(\ref{canonrhoj12}), to any spin:
\begin{theorem}
\label{THM1}
For general spin-$j$, the $4^{N}$ Hermitian matrices
$S_{\mu_1\mu_2\ldots\mu_{N}}$ (with $N\equiv 2j$) provide an overcomplete basis (more precisely, a $2^N$-tight frame) over which $\rho$ can be expanded, that is, any
state can be expressed as 
\begin{equation}
\rho =\frac{1}{2^{N}}\,x_{\mu_1\mu_2\ldots\mu_{N}}S_{\mu_1\mu_2\ldots\mu_{N}}, 
\label{canonrhoj}
\end{equation}
with coefficients
\begin{equation}
\label{defcoor}
x_{\mu_1\mu_2\ldots\mu_{N}}=\tr(\rho\, S_{\mu_1\mu_2\ldots\mu_{N}})
\end{equation}
real and invariant under permutation of the indices.
\end{theorem}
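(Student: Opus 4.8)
The plan is to obtain the expansion as a direct consequence of the tight-frame reconstruction formula, since the substantive work has already been carried out in Theorem~\ref{THM0} and its corollary. First I would observe that, under the identification $\ket{j,m}\leftrightarrow\ket{D_N^{(k)}}$ with $m=k-N/2$, the spin-$j$ Hilbert space is precisely $\mathcal{H}_S$, so any spin-$j$ density matrix $\rho$ belongs to the operator space $\mathcal{L}(\mathcal{H}_S)$. Equipped with the Hilbert-Schmidt product $\langle A,B\rangle=\tr(A^\dagger B)$, this space is itself a Hilbert space, and the corollary established above states that the covariant matrices $S_{\mu_1\mu_2\ldots\mu_N}$ form a $2^N$-tight frame of $\mathcal{L}(\mathcal{H}_S)$.

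I would then apply the reconstruction formula for tight frames (proposition~20 of~\cite{Cas13}) with $A=2^N$, taking $\rho$ in the role of $\ket{\psi}$ and the matrices $S_{\mu_1\mu_2\ldots\mu_N}$ in the role of the frame elements $\ket{\phi_i}$. This gives
\begin{equation}
\rho=\frac{1}{2^N}\,\langle S_{\mu_1\mu_2\ldots\mu_N},\rho\rangle\,S_{\mu_1\mu_2\ldots\mu_N},
\end{equation}
with implicit summation over the $4^N$ multi-indices. Because each covariant matrix is Hermitian, $\langle S_{\mu_1\mu_2\ldots\mu_N},\rho\rangle=\tr(S_{\mu_1\mu_2\ldots\mu_N}\,\rho)=x_{\mu_1\mu_2\ldots\mu_N}$, which reproduces Eqs.~\eqref{canonrhoj}--\eqref{defcoor}.

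Finally I would verify the two properties of the coefficients. Reality follows from the Hermiticity of both $\rho$ and the $S$ matrices, via the chain $\overline{\tr(\rho\,S_{\mu_1\mu_2\ldots\mu_N})}=\tr(S_{\mu_1\mu_2\ldots\mu_N}^\dagger\rho^\dagger)=\tr(S_{\mu_1\mu_2\ldots\mu_N}\,\rho)=\tr(\rho\,S_{\mu_1\mu_2\ldots\mu_N})$. Invariance of $x_{\mu_1\mu_2\ldots\mu_N}$ under permutation of its indices is then inherited immediately from the permutation invariance of the covariant matrices recorded below Eq.~\eqref{traceless}.

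The proof itself poses no real difficulty, as all the analytic content is concentrated in Theorem~\ref{THM0} and in proposition~22 of~\cite{Cas13}. The one point deserving care is bookkeeping: the frame is the image of all $4^N$ Pauli tensor products under $\mathcal{P}_S$, and since the $S$ matrices are permutation-symmetric, Pauli tensors differing by an index permutation project onto the same covariant matrix. The tight-frame bound $A=2^N$, and hence the reconstruction formula, hold only when the sum runs over all $4^N$ projected elements counted with these multiplicities --- exactly what the Einstein summation over $(\mu_1,\ldots,\mu_N)$ encodes --- so one must not collapse the repeated matrices, which would otherwise corrupt the frame constant.
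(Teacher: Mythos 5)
Your proposal is correct and follows essentially the same route as the paper: the paper likewise treats Theorem~\ref{THM1} as an immediate consequence of the corollary to Theorem~\ref{THM0} (the $S_{\mu_1\mu_2\ldots\mu_N}$ form a $2^N$-tight frame of $\mathcal{L}(\mathcal{H}_S)$ by proposition~22 of~\cite{Cas13}) combined with the tight-frame reconstruction formula (proposition~20 of~\cite{Cas13}), with reality and permutation invariance of the coefficients read off from the Hermiticity and index symmetry of the covariant matrices. Your closing remark about keeping all $4^N$ frame elements with multiplicity rather than collapsing repeated matrices is a correct and worthwhile point of bookkeeping that the paper leaves implicit.
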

Since $S_{00\ldots 0}$ is the identity matrix, the condition
$\tr\rho=1$ for density matrices is equivalent to $x_{00\ldots
  0}=1$. The tight frame property allows one to write the Hilbert-Schmidt scalar product of any two
Hermitian operators $\rho$ and $\rho'$ with coordinates
$x_{\mu_1\mu_2\ldots\mu_{N}}$ and  $x'_{\mu_1\mu_2\ldots\mu_{N}}$ as
the scalar product of coordinates, more precisely 
\begin{equation}
\label{scalarprod}
\tr(\rho\rho')=\frac{1}{2^N}x_{\mu_1\mu_2\ldots\mu_{N}}
x'_{\mu_1\mu_2\ldots\mu_{N}}.  
\end{equation}
The condition $\tr\rho^2\leqslant 1$ that every state must satisfy translates into $
\sum_{\mu_1}\ldots \sum_{\mu_N} x^2_{\mu_1\mu_2\ldots\mu_{N}}\leqslant 2^N$. Note that from Eq.~\eqref{defcoor}  and the definition
of $S_{\mu_1\mu_2\ldots\mu_N}$, the coordinates
$x_{\mu_1\mu_2\ldots\mu_N}$ appear as the coefficients of $(-1)^{N}\langle\Pi^{(j)}(q)\rangle$, which is a multivariate polynomial in variables $q_0, q_1,q_2, q_3$,
\begin{equation}
\label{eq:xq}
(-1)^{N}\langle\Pi^{(j)}(q)\rangle=x_{\mu_1\mu_2\ldots\mu_N}q_{\mu_1}\ldots q_{\mu_N}.
\end{equation}

Due to the overcompleteness of the $S_{\mu_1\mu_2\ldots\mu_{N}}$
the coordinates $x_{\mu_1\mu_2\ldots\mu_{N}}$ in \eqref{canonrhoj} are so far not unique. However, for a given spin-$j$ density matrix $\rho$, \eqref{defcoor} is the
unique choice of coordinates $x_{\mu_1\mu_2\ldots\mu_{N}}$ such that
these coordinates are real numbers, invariant under
permutation of the indices, and verifying the condition
$g_{\mu_1\mu_2}x_{\mu_1\mu_2\ldots\mu_{N}}=0$ (see Proposition 1 in the Supplemental Material).

The generalized Bloch representation (\ref{canonrhoj}) shares with the
Bloch representation of a spin-1/2 several crucial properties.  
First of all, using Eqs.~\eqref{eq:Stn} and \eqref{defcoor}, we see
that coordinates of a coherent state are simply given by the product
of components of the 4-vector $n=(1,{\bf n})$, namely
$x_{\mu_1\mu_2\ldots\mu_{N}}=n_{\mu_1}n_{\mu_2}\ldots
n_{\mu_{N}}$. This generalizes the fact that the Bloch vector
representing a spin-1/2 state points in 
the direction given by the angles defining the coherent state.
Secondly, under any SU(2) transformation, the Bloch vector of a
spin-1/2 simply rotates, i.e., transforms according to $x_a\to
R_{ab}x_b$, where $R$ is a rotation matrix.  Similarly, for higher spins
the tensor of coordinates of an arbitrary state transforms according to 
$x_{\mu_1\ldots \mu_N}\to R_{\mu_1\nu_1}\ldots
R_{\mu_N\nu_N}x_{\nu_1\ldots \nu_N}$, with $R_{ab}$ the $3\times 3$
rotation matrix and $R_{0\mu}=R_{\mu 0}=\delta_{\mu0}$. This is a
consequence of a more general covariance property of the basis
matrices $S_{\mu_1\mu_2\ldots\mu_N}$. Indeed, they were constructed in such a way that for any element $\Lambda$ of the Lorentz group, with
$D^{(j)}[\Lambda]$ the $(2j+1)$-dimensional 
matrix associated with $\Lambda$ in the $(j,0)$ representation,
\begin{equation} 
D^{(j)}[\Lambda]S_{\mu_1\mu_2\ldots\mu_N}D^{(j)}[\Lambda]^{\dagger}
=\Lambda^{\nu_1}_{\mu_1}\ldots \Lambda^{\nu_N}_{\mu_N}S_{\nu_1\nu_2\ldots\nu_N}
\end{equation}
in the covariant-contravariant notation of~\cite{Weinberg}. From
Eq.~\eqref{defcoor} this property translates to coordinates
$x_{\mu_1\mu_2\ldots\mu_{N}}$. For rotations $R_{\mu\nu}$, the distinction between upper and lower indices becomes irrelevant.

In addition to the shared advantages of a Bloch vector, 
our generalized Bloch sphere representation \eqref{canonrhoj} enjoys
additional convenient 
properties relevant for systems made of many spin-$1/2$ or qubits. For instance, coordinates of
the spin-$k$ reduced density matrix obtained by tracing the spin-$j$
matrix over $j-k$ spins are simply given by 
\begin{equation}
\label{proptrace}
x_{\mu_1\ldots \mu_{2k}}=x_{\mu_1\ldots \mu_{2k} 0\ldots0}
\end{equation}
(see Proposition 3 in the Supplemental Material). Note that in
\cite{Arn13} a similar property was observed for the coefficients in
the expansion of $\rho$ over generalized Pauli matrices, and a formal
Lorentz invariance of that expansion was used very recently to
generalize monogamy relations of entanglement~\cite{EltSie14}.\\

We now consider a few examples of states and give their coordinates in our representation. The maximally mixed state
$\rho_0=\frac{1}{2j+1}\mathbb{1}_{2j+1}$ has coordinates $x_{\mu_1\mu_2\ldots\mu_{2j}}$ given by
\begin{equation}
\label{coorid}
x_{\mu_1\mu_2\ldots\mu_{2j}}q_{\mu_1}\ldots q_{\mu_{2j}}=\sum_{k=0}^{j}\frac{\binom{2j}{2k}}{2k+1}q_0^{2(j-k)}|\bq|^{2k}
\end{equation}
(see Proposition 2 in the Supplemental Material). Another example is given by the Schr\"{o}dinger cat states $\ket{\psi_{\mathrm{cat}}^{(j)}}=(\ket{j,-j}+\ket{j,j})/\sqrt{2}$. By linearity of the expansion \eqref{canonrhoj} and of the trace, they have coordinates 
\begin{equation}
\label{coorschro}
x_{\mu_1\ldots \mu_{N}}^{\mathrm{cat}}=\frac{1}{2}\Bigg(\prod_{i=1}^{N} n_{\mu_i}^{(-\frac{1}{2},-\frac{1}{2})}+\prod_{i=1}^{N}n_{\mu_i}^{(\frac{1}{2},\frac{1}{2})}\Bigg)
+\mathrm{Re}\left[\prod_{i=1}^{N}n_{\mu_i}^{(-\frac{1}{2},\frac{1}{2})}\right]
\end{equation}
where $n^{(\pm \frac{1}{2},\pm \frac{1}{2})}=(1,0,0,\pm 1)$ are the
coordinates of the coherent states $\ket{\frac{1}{2},\pm
  \frac{1}{2}}\bra{\frac{1}{2},\pm \frac{1}{2}}$ and
$n^{(-\frac{1}{2},\frac{1}{2})}=(0,1,-i,0)$ are the coordinates of
the non-Hermitian operator $\ket{\frac{1}{2},-
  \frac{1}{2}}\bra{\frac{1}{2},\frac{1}{2}}$.

While the complete characterization of the set of coordinates for which $\rho$ is positive is difficult in any representation~\cite{SUNpos,Polapos,Weyl}, our representation \eqref{canonrhoj} allows one to solve this problem explicitely for $j=1$. The set of all spin-1 states is
characterized by 8 real parameters.  The transformation of tensor
$x_{\mu\nu}$ by rotation matrices under SU(2) operations allows one to
diagonalize the 3$\times$3 block $x_{ab}$ ($1\leqslant a,b\leqslant 3$), and
Eq.~(\ref{traceless}) imposes $\sum_{i=1}^3\mu_i=1$ for the eigenvalues
$\mu_i$, leaving five real parameters $\mu_1,\mu_2$,
$\bx\equiv(x_{01},x_{02},x_{03})$. In this case, $\bx$ coincides with $\bu$ in
the representation found in~\cite{GiraudBB}.  We therefore
immediately obtain that up to two special cases of measure zero the
set of all spin-1 states can be 
represented as a two--parameter family of ellipsoids in the space of
vectors $\bx$ (Eq.~(21) in~\cite{GiraudBB} with $\bu=\bx$ and $w_{ab}=x_{ab}$), thus providing a simple geometrical picture of all spin-1 states.

As a direct application of our formalism, we give a simple necessary
and sufficient criterion for anticoherence of spin states. 
Spin states are said to be anticoherent to order $t$ if $\langle
(\bn\boldsymbol{\cdot}\bJ)^k\rangle$ is independent on the unit vector $\bn$  for any
$k$ with $0\leqslant k\leqslant t$~\cite{Zimba}. Various characterisations have
been given~\cite{bannai}. Very recently the case of pure but not necessarily symmetric states was considered in~\cite{Arn13,GoyZyc14}. The definition of matrices $S_{\mu_1\mu_2\ldots\mu_{N}}$ via \eqref{lorentzboost}--\eqref{egaliteweinberg} as a function of $\bJ$ makes them most convenient for the characterisation of anticoherent states. One can show the following result:
\begin{theorem}
\label{th:anticoherence}
A spin-$j$ state $\rho$, pure or mixed, is anticoherent to order $t$ if and only if its spin-$(t/2)$ reduced density matrix is the maximally mixed state $\rho_0=\frac{1}{t+1}\mathbb{1}_{t+1}$.
\end{theorem}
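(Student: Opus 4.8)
The plan is to translate both the anticoherence condition and the maximal mixedness of the reduced state into statements about the same family of polynomials in $\bn$, namely the coefficients of the generating function $\langle\Pi^{(j)}(q)\rangle$ restricted to the slice $q=(1,\lambda\bn)$ with $\bn$ a unit vector and $\lambda$ a real parameter (throughout I assume $t\leqslant 2j$, so that the spin-$(t/2)$ reduction exists). The starting point is the observation that, from \eqref{lorentzboost}, setting $\bq=\lambda\bn$ and simplifying $\eta_q\hbq$ yields the factorization
\begin{equation}
\Pi^{(j)}(1,\lambda\bn)=(1-\lambda^2)^{j}\,e^{2\,\mathrm{arctanh}(\lambda)\,\bn\boldsymbol{\cdot}\bJ}.
\end{equation}
Since the scalar prefactor $(1-\lambda^2)^j$ does not depend on $\bn$, the $\bn$-dependence of $\langle\Pi^{(j)}(1,\lambda\bn)\rangle$ is carried entirely by $\langle e^{2u\,\bn\boldsymbol{\cdot}\bJ}\rangle$ with $\lambda=\tanh u$, whose expansion in $u$ has coefficients proportional to the moments $\langle(\bn\boldsymbol{\cdot}\bJ)^k\rangle$. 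First I would establish, as the conceptual core, that anticoherence to order $t$ is equivalent to the $\bn$-independence of the coefficients of $\lambda^m$, for $0\leqslant m\leqslant t$, in $\langle\Pi^{(j)}(1,\lambda\bn)\rangle$. This is not quite a tautology: passing from $u$ to $\lambda=\tanh u$ and dividing by $(1-\lambda^2)^{j}$ are both operations on power series with $\bn$-independent coefficients and invertible (unit) leading term, so they preserve, order by order, the property ``the first $t+1$ coefficients do not depend on $\bn$''; a short triangular induction makes this precise.

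Next I would connect these coefficients to the reduced state. By \eqref{eq:xq} the coefficient of $\lambda^m$ in $(-1)^{N}\langle\Pi^{(j)}(1,\lambda\bn)\rangle$ is $\binom{N}{m}P_m(\bn)$, where
\begin{equation}
P_m(\bn)\equiv\sum_{a_1,\ldots,a_m}x_{a_1\ldots a_m0\ldots0}\,n_{a_1}\cdots n_{a_m}
\end{equation}
involves the coordinates carrying $m$ spatial indices and $N-m$ time indices. By the partial-trace rule \eqref{proptrace}, these coordinates coincide with the coordinates of the spin-$(t/2)$ reduced density matrix $\rho'$ (with $t-m$ time indices appended), so that $(-1)^{t}\langle\Pi^{(t/2)}(1,\lambda\bn)\rangle_{\rho'}=\sum_{m=0}^{t}\binom{t}{m}P_m(\bn)\,\lambda^m$. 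The crucial bookkeeping point is that as $m$ runs over $0,\ldots,t$ the polynomials $P_m$ exhaust \emph{every} coordinate of $\rho'$ (each coordinate is characterized, up to symmetry, by how many of its indices are spatial), so no information about $\rho'$ is lost. Hence ``$P_m$ is $\bn$-independent for all $0\leqslant m\leqslant t$'' is equivalent to the full $\bn$-independence of the degree-$t$ polynomial $\langle\Pi^{(t/2)}(1,\lambda\bn)\rangle_{\rho'}$.

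It then remains to show that this $\bn$-independence holds if and only if $\rho'$ is maximally mixed. Applying the same factorization to the spin-$(t/2)$ system, $\bn$-independence of $\langle\Pi^{(t/2)}(1,\lambda\bn)\rangle_{\rho'}$ is equivalent to $\bn$-independence of $\langle e^{2u\,\bn\boldsymbol{\cdot}\bJ}\rangle_{\rho'}$ for all $u$, hence of all moments. Writing $\langle e^{2u\,\bn\boldsymbol{\cdot}\bJ}\rangle_{\rho'}=\langle e^{2u\,\bn'\boldsymbol{\cdot}\bJ}\rangle_{\rho'}$ for $\bn'=R\bn$ and using $\bn'\boldsymbol{\cdot}\bJ=D^{(t/2)}(R)(\bn\boldsymbol{\cdot}\bJ)D^{(t/2)}(R)^{\dagger}$, one obtains $\tr\big[(D^{(t/2)}(R)^{\dagger}\rho'D^{(t/2)}(R)-\rho')\,e^{2u\,\bn\boldsymbol{\cdot}\bJ}\big]=0$ for all $u,\bn$; since the operators $e^{2u\,\bn\boldsymbol{\cdot}\bJ}$ span the whole space of operators, $\rho'$ commutes with every rotation, and Schur's lemma on the irreducible spin-$(t/2)$ representation forces $\rho'=\frac{1}{t+1}\mathbb{1}_{t+1}$. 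The converse is immediate and can alternatively be read off from the maximally-mixed coordinates \eqref{coorid}, whose restriction to $q=(1,\lambda\bn)$ depends only on $q_0$ and $|\bq|$ and is therefore manifestly $\bn$-independent. I expect the main obstacle to be precisely the middle step: verifying that $\bn$-independence transfers faithfully, order by order, between the three generating functions (the triangular power-series argument) and that $P_0,\ldots,P_t$ really encode all of $\rho'$; once this is secured, the identification with maximal mixedness follows cleanly from Schur's lemma.
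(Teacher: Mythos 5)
Your proposal is correct, and it follows the same overall strategy as the paper --- relate anticoherence to the low-degree (in $\bq$) coefficients of $\langle\Pi^{(j)}(q)\rangle$, transfer those coefficients to the spin-$(t/2)$ reduced state via the partial-trace rule \eqref{proptrace}, and then identify the maximally mixed state --- but the two key technical steps are executed differently. The paper works directly with the explicit polynomial expansions \eqref{pientier}--\eqref{pidemi}, in which anticoherence makes every $\langle(\dqJ)^k\rangle$ with $k\leqslant t$ a function of $|\bq|$ alone, and then matches the resulting coefficients against the explicit coordinates \eqref{coorid} of $\rho_0$; you instead use the closed exponential form \eqref{lorentzboost} on the slice $q=(1,\lambda\bn)$, a triangular change-of-variables argument ($u=\mathrm{arctanh}\,\lambda$ and the $\bn$-independent prefactor $(1-\lambda^2)^j$) to pass between moments and coefficients, and Schur's lemma to conclude $\rho'\propto\mathbb{1}_{t+1}$. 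Your route avoids the combinatorial bookkeeping of the products in \eqref{pientier}--\eqref{pidemi} and makes the ``only if'' direction conceptually transparent (rotation invariance of $\rho'$ on an irreducible representation), at the cost of two standard facts you should state explicitly: that the operators $(\bn\boldsymbol{\cdot}\bJ)^k$ with $k\leqslant t$ and $\bn$ varying span all of $\mathcal{L}(\mathbb{C}^{t+1})$ for spin $t/2$ (equivalently, that the irreducible tensor operators $T^{(t/2)}_{kq}$ with $k\leqslant t$ are symmetrized degree-$k$ polynomials in the $J_a$ and form a basis), and the standing hypothesis $t\leqslant 2j$, without which the reduced state does not exist (and which is anyway forced, since a spin-$j$ state cannot be anticoherent to order $t>2j$). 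The paper's coefficient-matching is more self-contained given the formulas already displayed, and yields the explicit characterizations $\langle S_{\mu 0\ldots0}\rangle=\delta_{\mu0}$, etc., as a byproduct; your argument generalizes more readily since it never needs the closed form \eqref{coorid}.
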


The proof (see Supplemental Material for more detail) relies on the calculation of $\langle \Pi^{(j)}(q)\rangle$ for an anticoherent state, using the expansion \eqref{pientier}--\eqref{pidemi} and identifying terms up to order $t$ with the expansion \eqref{coorid} of the maximally mixed state. For instance, spin-$j$ anticoherent states to order 1 are characterized by $\langle S_{\mu 00\ldots 0}
\rangle=\delta_{\mu 0}$ while anticoherent states to order 2 are characterized by $\langle S_{\mu \nu 00\ldots 0}
\rangle=\mathrm{diag}(1,1/3,1/3,1/3)$. From the characterization of anticoherence given by Theorem \ref{th:anticoherence}, one can easily obtain another characterization based on coefficients of the multipolar expansion of the density matrix. For a spin-$j$ density operator $\rho$, the expansion reads
\begin{equation}
\label{rhomultipolar}
\rho=\sum_{k=0}^{2j}\sum_{q=-k}^k \rho_{kq}\,T_{kq}^{(j)}
\end{equation}
with $\rho_{kq}=\tr\big(\rho\, {T_{kq}^{(j)}}^{\dagger}\big)$, where $T_{kq}^{(j)}$ are the irreducible tensor operators~\cite{Aga81}
\begin{equation}
\label{irrecT}
T_{kq}^{(j)}=\sqrt{\frac{2k+1}{2j+1}}\sum_{m,m'=-j}^j C_{jm,kq}^{jm'} \ket{j,m'}\bra{j,m},
\end{equation}
and $C_{jm,kq}^{jm'}$ are Clebsch-Gordan coefficients. The following corollary of Theorem 3 can now be stated (see Supplemental Material for a proof).

{\bf Corollary 1.} \emph{A spin-$j$ state $\rho$ is anticoherent to order $t$ if and only if $\rho_{kq}=0,\;\;\forall\;k\leqslant t$, $\forall\;q:-k\leqslant q\leqslant k$.}

Note that in~\cite{BBM14} the current characterizations were obtained up to second order.

In summary, we have introduced a tensorial representation of spin states that leads to a natural generalization of the Bloch sphere
representation to arbitrary spin $j$, based on Weinberg's covariant
matrices~\cite{Weinberg}.  We have found a convenient way of representing these matrices as
projections of elements of the Pauli group into the symmetric subspace
of $2j$ spins-1/2, proving that they form
a tight frame. Our representation shares 
beautiful and essential properties with the one for spin-1/2 (or qubit), and
provides additional insight 
for larger spins that we have used for a novel characterization of
anticoherent spin states. We expect that the mathematical elegance of our representation will enable new insights in different fields of physics where spins are relevant.


\begin{thebibliography}{99}

\bibitem{Perdew86} J.~P.~Perdew, Phys.\ Rev.\ B {\bf 33}, 8822(R) (1986).

\bibitem{Yu50} W.~G.~Proctor and F.~C.~Yu, Phys.\ Rev.\ {\bf 77}, 717 (1950).

\bibitem{Pauli25} W.~Pauli, Z.\ Phys.\ {\bf 31}, 373 (1925).

\bibitem{Ohlsen72} G.~G.~Ohlsen, Rep.\ Prog.\ Phys.\ {\bf 35}, 717 (1972).

\bibitem{Adamczyk14} L.~Adamczyk~\emph{et al.} (STAR Collaboration), Phys.\ Rev.\ Lett. {\bf 113}, 022301 (2014).

\bibitem{Pauli40} W.~Pauli, Phys.~Rev. {\bf 58}, 716 (1940).

\bibitem{Zutic04} I.~\v{Z}uti\'{c}, F.~Jaroslav, and S.~D.~Sarma, Rev. Mod. Phys. {\bf 76}, 323 (2004).

\bibitem{Nie00} M.~Nielsen and I.~Chuang, \emph{Quantum Computation and Quantum Information} (Cambridge University Press, 2000).

\bibitem{Rabi38} I.~I.~Rabi, J.~R.~Zacharias, S.~Millman, and P.~Kusch, Phys.\ Rev.\ {\bf 53}, 318 (1938).

\bibitem{Goyal11} S.~K.~Goyal, B.~N.~Simon, R.~Singh, and S.~Simon, arXiv:1111.4427v1.

\bibitem{Zyczkowski_book} I.\ Bengtsson and K.\ {\.Z}yczkowski,\emph{Geometry of Quantum States : An Introduction to Quantum Entanglement}(Cambridge University Press, Cambridge, 2008), 2nd ed.

\bibitem{Majorana32} E.~Majorana, Nuovo Cimento {\bf 9}, 43 (1932).

\bibitem{Bloch45} F.~Bloch and I.~I.~Rabi, Rev.~Mod.~Phys.~{\bf 17}, 237 (1945).

\bibitem{Mar10} J.~Martin, O.~Giraud, P.~A.~Braun, D.~Braun, and T.~Bastin, Phys.~Rev.~A {\bf 81}, 062347 (2010).

\bibitem{Markham11} D.~J.~H.~Markham, Phys.\ Rev.\ A {\bf 83}, 042332 (2011).

\bibitem{Giraud10} O.~ Giraud, P.~ Braun, and D.~ Braun, New J.~ Phys. {\bf 12}, 063005 (2010).

\bibitem{Bruno12} P.~Bruno, Phys.\ Rev.\ Lett.\ {\bf 108}, 240402 (2012).

\bibitem{SUNpos} L.~Jakobczyk and, M.~Siennicki, Phys.\ Lett.\ A {\bf286}, 383 (2001); 
G.~Kimura, Phys.\ Lett.\ A {\bf 314}, 339 (2003);
I.~P.~Menda\v{s}, J.\ Phys.\ Math.\ Gen.\ {\bf 39}, 11313 (2006);
G.~Kimura, A.~Kossakowski, Open Syst.\ Inf.\ Dyn.\ {\bf 12}, 207 (2005).

\bibitem{SUN} F.~T.~Hioe, J.~H.~Eberly, Phys.\ Rev.\ Lett.\ {\bf 47}, 838 (1981).

\bibitem{Aga81} G.~S.~Agarwal, Phys.\ Rev.\ A {\bf 24}, 2889 (1981).

\bibitem{Polapos} S.~Kryszewski and, M.~Zachcia\l, J.\ Phys.\ A: Math.\ Gen.\ {\bf 39}, 5921 (2006); M.~S.~Byrd, C.~A.~Bishop, and Y.~C.~Ou, Phys.\ Rev.\ A {\bf 83}, 012301 (2011).

\bibitem{Pola} K.~Blum, \emph{Density Matrix Theory and Applications}, 3rd Edition (Springer 2012).

\bibitem{Weyl} R.~A.~Bertlmann and P.~Krammer, J.~Phys.~A: Math.~Theor.\ {\bf 41}, 235303 (2008);
B.~Baumgartner, B.~C.~Hiesmayr and, H.~Narnhofer, Phys.\ Rev.\ A {\bf 74}, 032327 (2006).

\bibitem{Mosseri01} R.~Mosseri and R.~Dandoloff, J.\ Phys.\ A: Math.\ Gen.\ {\bf 34}, 10243 (2001).

\bibitem{Ber03} B.~A.~Bernevig, and H.-D.~Chen, J.\ Phys.\ A: Math.\ Gen.\ {\bf 36}, 8325 (2003).

\bibitem{Dietz06} K.~Dietz, J.\ Phys.\ A: Math.\ Gen. {\bf 39}, 1433-1447 (2006).

\bibitem{Weinberg} S.~Weinberg, Phys.~Rev.\ {\bf 133}, B1318 (1964).

\bibitem{Hoz13} P.~de~la~Hoz, A.~B.~Klimov, G.~Bj\"ork, Y.-H.~Kim, C.~M\"{u}ller, Ch.~Marquardt, G.~Leuchs, and L.~L.~S\'anchez-Soto, Phys.\ Rev.\ A {\bf 88}, 063803 (2013).

\bibitem{SanchezSoto13} L.~L.~S\'anchez-Soto, A.~B.~Klimov, P.~de~la~Hoz, and G.~Leuchs, J.\ Phys.\ B : At.\ Mol.\ Opt.\ Phys. {\bf 46}, 104011 (2013).

\bibitem{Crann} J.~Crann, R.~Pereira, and D.~W.~Kribs, J.~Phys.~A: Math.~Theor.~{\bf 43}, 255307 (2010).

\bibitem{BBM14} D.~Baguette, T.~Bastin, and J.~Martin, Phys.~Rev.~A {\bf 90}, 032314 (2014).

\bibitem{Cas13}  P.~G.~Casazza, G.~Kutyniok, and F.~Philipp, in {\em Introduction to Finite Frame Theory}, edited by P.~G.~Casazza, G.~Kutyniok, Finite Frames - Theory and Applications (Springer Science \& Business Media, New York, 2013).

\bibitem{Arn13} L.~Arnaud and N.~J.~Cerf, Phys.~Rev.~A {\bf 87}, 012319 (2013).

\bibitem{EltSie14} C.~Eltschka and J.~Siewert, arXiv:1407.8195.

\bibitem{GiraudBB} O.~Giraud, P.~Braun, and D.~Braun, Phys.\ Rev.\ A  {\bf 85}, 032101 (2012).

\bibitem{Zimba}J.~Zimba, EJTP \textbf{3}, 10 (2006).

\bibitem{bannai}E.~Bannai and M.~Tagami, J.~Phys.~A: Math.~Theor.\ {\bf 44}, 342002 (2011).

\bibitem{GoyZyc14} D.~Goyeneche and K.\ {\.Z}yczkowski, Phys.\ Rev.\ A {\bf 90}, 022316 (2014).

\bibitem{Supmat} See Supplemental Material, which includes Refs. [40-41].

\bibitem{arecchi} F.~T.~Arecchi, E.~Courtens, R.~Gilmore and H.~Thomas, Phys.\ Rev.\ A {\bf 6}, 2211 (1972).

\bibitem{Bjork12} G.~Bj\"ork, J.~S\"oderholm, Y.-S.~Kim, Y.-S.~Ra, H.-T.~Lim, C.~Kothe, Y.-H.~Kim, L.~L.~S\'anchez-Soto, and A.~B.~Klimov, Phys.\ Rev.\ A {\bf 85}, 053835 (2012). 

\end{thebibliography}
\end{document}